\def\BibTeX{{\rm B\kern-.05em{\sc i\kern-.025em b}\kern-.08em
		T\kern-.1667em\lower.7ex\hbox{E}\kern-.125emX}}
\newcommand{\norm}[1]{\left\lVert#1\right\rVert}
\def\rea{\mathds{R}}
\newtheorem{remark}{Remark}
\newtheorem{theorem}{Theorem}
\newtheorem{assumption}{Assumption}
\newtheorem{corollary}{Corollary}
\newtheorem{proposition}{Proposition}
\title{\Large \bf Exponential Stability and Tuning for a Class of Mechanical Systems}
\author{Carmen Chan-Zheng, Pablo Borja, Nima Monshizadeh, and Jacquelien M.A. Scherpen
	\thanks{The work of Carmen Chan-Zheng is supported by the University of Costa Rica. The authors are with the Jan C. Willems Center for Systems and 	Control, Engineering and Technology Institute Groningen, Faculty of Science and Engineering, University of Groningen,  9747 AG Groningen, The Netherlands (email: c.chan.zheng@rug.nl,l.p.borja.rosales@rug.nl, n.monshizadeh@rug.nl, j.m.a.scherpen@rug.nl). }
}
\begin{document}
	
	\maketitle
	\thispagestyle{empty}
	\pagestyle{empty}

	%%%%%%%%%%%%%%%%%%%%%%%%%%%%%%%%%%%%%%%%%%%%%%%%%%%%%%%%%%%%%%%%%%%%%%%%%%%%%%%%
	\begin{abstract}
	In this paper, we prove the exponential stability property of a class of mechanical systems represented in the port-Hamiltonian framework.  To this end, we propose a Lyapunov candidate function different from the Hamiltonian of the system.  Moreover,  we study how the proposed analysis can be used to determine the exponential stability and the rate of convergence of some (nonlinear)-mechanical systems stabilized by a \textit{passivity-based control} technique, namely, PID passivity-based control. We implement such a control approach to stabilize a three-degree-of-freedom robotic arm at the desired equilibrium point to illustrate the mentioned analysis.
	\end{abstract}

	%%%%%%%%%%%%%%%%%%%%%%%%%%%%%%%%%%%%%%%%%%%%%%%%%%%%%%%%%%%%%%%%%%%%%%%%%%%%%%%%
	\section{Introduction}
	The port-Hamiltonian (pH) framework is suitable to represent a wide variety of nonlinear physical systems from different domains \cite{duindam2009modeling,van2014port}. This framework highlights the physical properties of the system under study, particularly the roles that the dissipation and the energy play in its behavior. Moreover, the passivity property of these systems is verified by selecting the Hamiltonian as the storage function. While the passivity property can be related to certain types of stability, for instance, $\mathcal{L}_{2}$\textit{-gain stability} and \textit{Lyapunov stability} \cite{van2000l2}, its relationship with stronger stability properties such as \textit{exponential stability} (ES) is not straightforward. In this work, we focus on the study of the ES property of mechanical systems.  
	
	Proving ES properties for nonlinear systems is, in general, a challenging task. In particular, for mechanical systems, the non-constant inertia matrix represents an obstacle to prove such a stability property. To overcome this, the partial linearization via change of coordinates (PLvCC) \cite{venkatraman2010speed} represents a suitable tool as the inertia matrix becomes constant for the transformed system. By adopting this approach, the authors in \cite{romero2014globally, ferguson2019kinetic} prove \textit{global exponential stability} (GES) properties for fully-actuated mechanical systems in closed-loop with passivity-based control (PBC) approaches. Also, the complexity of proving GES properties increases when the controlled mechanical system is underactuated since the controller is unable to modify the damping of the unactuated coordinates. For example, the results reported in \cite{romero2014globally} are only valid when the damping matrix of the closed-loop system is positive definite, which is not the case if the natural damping is neglected. Some additional results for ES properties of underactuated mechanical systems can be found in \cite{gomez2004physical,venkatraman2008control}.

	Customarily, the Hamiltonian is chosen as the Lyapunov candidate function to prove the stability properties of a pH system. While this approach is convenient to prove the effectiveness of PBC techniques to stabilize mechanical systems, it is, in principle, not adequate to prove ES properties for the resulting closed-loop system. This drawback arises because the Hamiltonian is not \textit{negative definite} as there is no damping related to the dynamics of the generalized coordinates. An alternative to address this issue is implementing the PLvCC methodology that preserves the pH structure of the system as reported in \cite{romero2014globally}. 
	
	Another approach to prove ES properties of mechanical systems consists of finding a new Lyapunov function different from the Hamiltonian as it is proposed \cite{acosta2009,delgado2014overcoming}.  However, the proposed Lyapunov candidate in \cite{acosta2009} only proves stability for fully-actuated systems, while the proposed function in \cite{delgado2014overcoming} only proves asymptotic stability.  Moreover, the mentioned references do not explore the effect of the control gains--which may be related to physical quantities--on the rate of convergence of the trajectories of the closed-loop system. Albeit in \cite{ferguson2019kinetic} investigates the effect of the damping in the rate of convergence of the controlled mechanical system, the relationship between the physical quantities of the closed-loop system--or open-loop if the system is uncontrolled--with the decay rate of its trajectories remains rather unexplored.

	In this paper, we employ both strategies, i.e., PLvCC plus finding a suitable Lyapunov candidate function. This approach is suitable to prove the ES properties of a large class of mechanical systems--including underactuated mechanical systems---while preserving the mechanical structure for the closed-loop system. By preserving this structure, we can understand the effect of the damping and the energy of the system on the rate of convergence of its trajectories.

	The \textit{main contributions} of this work are
	\begin{enumerate}[label=(\roman*), wide=0pt]
		\item A novel analysis approach to prove ES properties for a class of pH systems.
		\item The proof of, under mild conditions, the desired equilibrium point for mechanical systems stabilized via PID-PBC is exponentially stable.
		\item An analysis of the effect of modifying the damping and the energy of the closed-loop system on the rate of convergence of its trajectories. 
	\end{enumerate}  
	
	The remainder of this paper is structured as follows: in Section II, we present the analysis to determine the ES properties of a class of pH systems. In Section III, we show how the mentioned analysis is suitable for proving ES properties for nonlinear mechanical systems stabilized via PID-PBC. In Section IV, we present tuning guidelines that relate the PID-PBC parameters to the decay rate of the closed-loop system. In Section V, we illustrate the applicability of the tuning guidelines by showing the experimental results of the stabilization of a robotic arm via PID-PBC. We finalize this manuscript with some concluding remarks in Section VI. 
	
	\textbf{Notation}: We denote the $n\times n$ identity matrix as $I_n$ and the $n\times m$ matrix of zeros as $0_{n\times m}$. For a given smooth function $f:\rea^n\to \rea$, we define the differential operator $\nabla_x f:=(\frac{\partial f}{\partial x})^\top$ and $\nabla^2_x f:=\frac{\partial^2 f}{\partial x^2}$. For a smooth mapping $F:\rea^n\to\rea^m$, we define the $ij-$element of its $n\times m$ Jacobian matrix as $(\nabla_x F)_{ij}:=\frac{\partial F_i}{\partial x_j}$. For a given  matrix ${A\in\rea^{n\times n}}$, we denote its symmetric part by ${A_{sym}:=\frac{1}{2}(A+A^\top)}$. For a given vector $x\in\rea^{n}$, we say that $A$ is \textit{positive definite (semi-definite)}, denoted as $A>0$  ($A\geq0$), if $A=A^{\top}$ and $x^{\top}Ax>0$ ($x^{\top}Ax\geq0$) for all $x\in \rea^{n}-\{0_{n} \}$ ($\rea^{n}$). For a positive (semi-)definite  matrix $A$,  we define the weighted Euclidean norm as $\lVert x \rVert_{A}:=\sqrt{x^{\top}Ax}$. For $A=A^\top$, we denote by $\lambda_{\max}(A)$ as the maximum eigenvalue of $A$.  All the functions considered in this manuscript are assumed to be (at least) twice continuously differentiable.\\ \textbf{Caveat:} when it is clear from the context, we omit the arguments and the subindex in $\nabla$ to simplify the notation.

	\section{Exponential stability of a class of pH systems}\label{main}
	Consider a pH system whose dynamics are described by
	\begin{equation}\label{phsys}
		\arraycolsep=0.4pt \def\arraystretch{1.4}
		\begin{array}{ll}
			\begin{bmatrix}
				\dot{{q}}\\\dot{{p}}
			\end{bmatrix}=\begin{bmatrix}
				0_{n\times n}&{A}({q})\\
				-{A}^{\top}({q}) &\ \ {J}({q},{p})-{D}({q},{p})
			\end{bmatrix}\begin{bmatrix}
				\nabla_{q}	{H}({q},{p}) \\\nabla_{{p}} 	{H}({q},{p})
			\end{bmatrix}\\
			{H}({q},{p})=\frac{1}{2}{p}^\top {p}+{U}({q})
		\end{array}
	\end{equation}
	where $q,p\in \rea^{n}$, $A:\rea^n\to\rea^{n\times n}$ is full rank, $J:\rea^{n}\times \rea^{n}\to\rea^{n\times n}$ is skew-symmetric, $D:\rea^{n}\times \rea^{n}\to\rea^{n\times n}$ is positive definite, $H:\rea^{n}\times \rea^{n}\to\rea_+$ is the Hamiltonian of the system, and $U:\rea^{n}\to\rea_+$ is potential energy of the system.
	
	The following assumptions characterize the class of systems under study throughout this paper:
	\begin{itemize}[wide=0pt]
		\item	\begin{assumption}\label{ass1}
			$U(q)$ is \textit{locally strongly convex} and has an \textit{isolated local minimum} at $0_n$. 
		\end{assumption}
		\item	\begin{assumption}\label{ass2}
			For all $q\in \rea^n$, every element of $A(q)$ is bounded, i.e., $\norm{A(q)}< \infty$. Furthermore, for all $q,p \in \rea^n$, every element of $D(q,p)$ is bounded, i.e., ${\norm{D(q,p)}< \infty}$.
		\end{assumption}
	\end{itemize}
	
	The following theorem establishes the main results of this paper.
	\begin{theorem}\label{prop1}
		Consider the pH in \eqref{phsys} and $x_\star:=0_{2n}$, then:
		\begin{enumerate}[label=\textit{(\roman*)},wide=0pt]
			\item $x_\star$ is an exponentially stable equilibrium point for \eqref{phsys} with Lyapunov function
			\begin{equation} \label{candidate}
				\begin{split}
					S(q,p)&:=H(q,p)+\epsilon{p}^\top \Phi(q)\nabla_q U(q),
				\end{split}
			\end{equation}
			where $\epsilon>0$, and $\Phi:\rea^n \to \rea^{n\times n}$ satisfies
			\begin{equation}\label{condprop1}
				A(q)\Phi(q)+\Phi^\top(q) A^\top(q) >0,
			\end{equation}
			and $\norm{\Phi(q)}\leq \infty$.
			\item Furthermore, $x_\star$ is a \textit{globally} exponentially stable equilibrium point for \eqref{phsys} if $U(q)$ is radially unbounded. 
		\end{enumerate}
	\end{theorem}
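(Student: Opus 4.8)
The plan is to certify that $S$ in \eqref{candidate} is a genuine Lyapunov function by producing constants $c_1,c_2,c_3>0$ (valid on a neighborhood of $x_\star$ for part \textit{(i)}, and globally for part \textit{(ii)}) such that, writing $x:=[q^\top\ p^\top]^\top$,
\[
c_1\norm{x}^2\le S(q,p)\le c_2\norm{x}^2,\qquad \dot S(q,p)\le -c_3\norm{x}^2 .
\]
The standard Lyapunov criterion for exponential stability then gives $\norm{x(t)}\le \sqrt{c_2/c_1}\,\norm{x(0)}\,e^{-(c_3/2c_2)t}$. The conceptual point is that the Hamiltonian alone yields only $\dot H=-p^\top D p$, which is negative \emph{semi}definite since no damping acts on the $q$-dynamics; the cross term $\epsilon\,p^\top\Phi\nabla_q U$ is engineered precisely so that its time derivative injects a strictly negative term in the $q$-direction, and \eqref{condprop1} is exactly the condition rendering that injected term sign-definite.

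For the sandwich bounds I would invoke Assumption~\ref{ass1}: local strong convexity of $U$ with an isolated minimum at $0_n$ gives $\nabla_q U(0_n)=0_n$, the estimates $\tfrac{\alpha}{2}\norm{q}^2\le U(q)\le \tfrac{\beta}{2}\norm{q}^2$, and $\alpha\norm{q}\le\norm{\nabla_q U(q)}\le L\norm{q}$ on a neighborhood (the lower bound following from $\nabla_q U(q)^\top q\ge \alpha\norm{q}^2$). Since $\norm{\Phi}<\infty$, Young's inequality bounds the cross term by $\tfrac{\epsilon\norm{\Phi}L}{2}(\norm{p}^2+\norm{q}^2)$, so for $\epsilon$ small enough both $S-c_1\norm{x}^2$ and $c_2\norm{x}^2-S$ are nonnegative.

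For the derivative I would differentiate along \eqref{phsys}, using $\dot q=Ap$, $\dot p=-A^\top\nabla_q U+(J-D)p$ and $\tfrac{d}{dt}\nabla_q U=\nabla_q^2 U\,Ap$. This yields $\dot H=-p^\top D p$ and, after using $(J-D)^\top=-(J+D)$,
\[
\begin{aligned}
\frac{d}{dt}\big(p^\top\Phi\nabla_q U\big)=&-\nabla_q U^\top (A\Phi)_{sym}\,\nabla_q U + p^\top\Phi\,\nabla_q^2 U\,A\,p\\
&-p^\top(J+D)\Phi\,\nabla_q U + p^\top\dot\Phi\,\nabla_q U ,
\end{aligned}
\]
where $2(A\Phi)_{sym}$ is the matrix in \eqref{condprop1}. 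The first term is the desired $q$-damping, $-\epsilon\,\nabla_q U^\top(A\Phi)_{sym}\nabla_q U\le -\epsilon\lambda_{\min}\!\big((A\Phi)_{sym}\big)\norm{\nabla_q U}^2$, which together with $\norm{\nabla_q U}\ge\alpha\norm{q}$ controls $\norm{q}^2$.

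The main obstacle, and the reason $\epsilon$ must be chosen small, is that the remaining terms are sign-indefinite: $p^\top\Phi\nabla_q^2 U\,Ap$ is a genuine quadratic-in-$p$ contribution that may be positive, and the two products are indefinite $p$--$\nabla_q U$ cross terms. Here Assumption~\ref{ass2} ($\norm{A},\norm{D}<\infty$), $\norm{\Phi}<\infty$, and the local boundedness of $\nabla_q^2 U$ bound all their coefficients uniformly. I would absorb the cross terms by Young's inequality, splitting each into a $\norm{p}^2$ part and a $\norm{\nabla_q U}^2$ part, with the split tuned so that the latter stays below $\tfrac{\epsilon}{2}\lambda_{\min}((A\Phi)_{sym})\norm{\nabla_q U}^2$; the term $p^\top\dot\Phi\,\nabla_q U$ is cubic (since $\dot\Phi$ is linear in $p$) and hence dominated on a sufficiently small neighborhood. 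Collecting terms gives $\dot S\le -(\underline d-\epsilon\kappa)\norm{p}^2-\tfrac{\epsilon}{2}\lambda_{\min}((A\Phi)_{sym})\norm{\nabla_q U}^2$, where $\kappa>0$ aggregates the above bounds and $\underline d>0$ lower-bounds $\lambda_{\min}(D)$ near the origin by continuity; choosing $\epsilon<\underline d/\kappa$ then produces $\dot S\le -c_3\norm{x}^2$, proving \textit{(i)}. For \textit{(ii)}, radial unboundedness of $U$ (with the convexity and boundedness estimates now holding globally) upgrades both the sandwich and the derivative bounds to global ones, so the same choice of $\epsilon$ yields global exponential stability.
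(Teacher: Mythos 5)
Your proposal is correct, and its skeleton is the same as the paper's: identical sandwich bounds obtained from local strong convexity plus Young's inequality, and an identical expression for $\dot S$ --- your formula for $\tfrac{d}{dt}\big(p^\top\Phi\nabla_q U\big)$ matches the paper's $\dot C$ term by term once one notes $(J-D)^\top=-(J+D)$. Where you genuinely diverge is the final step certifying negativity of $\dot S$. The paper packages $\dot S=-\nabla H^\top\Upsilon\nabla H$ as a quadratic form in $(\nabla_q U,p)$, fixes $\Phi=A^\top$, and runs a Schur-complement test on the symmetrized block matrix \eqref{upsilonsym2}; you instead keep $\Phi$ arbitrary subject only to \eqref{condprop1} and absorb the indefinite cross terms via Young's inequality into the two good terms $-\underline{d}\norm{p}^2$ and $-\epsilon\lambda_{\min}\big((A\Phi)_{sym}\big)\norm{\nabla_q U}^2$, handling the cubic term $p^\top\dot\Phi\nabla_q U$ by shrinking the neighborhood. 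These are two equivalent devices for checking positive definiteness of a $2\times 2$ block form, but your version is slightly more general and more faithful to the statement as written: the paper only exhibits one admissible pair $(\Phi,\epsilon)$ via the choice $\Phi=A^\top$ (cf.\ Remark \ref{rem1}), whereas you show that every bounded $\Phi$ obeying \eqref{condprop1} works for $\epsilon$ sufficiently small. What the paper's route buys in exchange is the explicit matrix $\Upsilon$ and its minimum eigenvalue $\mu$, which are reused verbatim in Corollary \ref{cor1} and in the tuning guideline of Section \ref{sec:tune}; your Young-splitting constants would serve the same purpose, but less transparently. One caveat, shared by both arguments rather than a gap in yours: for part \textit{(ii)}, radial unboundedness of $U$ alone only globalizes compactness of sublevel sets, and the quadratic sandwich and derivative bounds must additionally hold globally for genuine GES --- your parenthetical at least flags this, while the paper's one-line proof does not.
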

	\begin{proof}
		To prove $(i)$, let $x:=col(q,p)$. As a consequence of Assumption \ref{ass1}, there exists a neighborhood of $x_{\star}$ such that (see \cite{nesterov2013introductory}, \cite{boyd2004convex})
		\begin{equation}\label{H}
			\frac{\beta_{min}}{2} \norm{x}^2\leq H \leq\frac{\beta_{\max}}{2}\norm{x}^2,
		\end{equation}
		where $\beta_{min},\beta_{\max}>0$. Define $C(q,p):=\epsilon{p}^\top \Phi(q)\nabla_q U(q)$. Then, it follows that~\footnote{We have used Young's inequality to obtain the third expresion.}
		\begin{equation}\label{C}
			\begin{split}
				C\leq \norm{\epsilon{p}^\top \Phi\nabla_q U}
				%		&\leq \epsilon \norm{{p}}\norm{\Phi}\norm{\nabla_q U}\\
				%		&\leq \frac{\epsilon \norm{\Phi}}{2} (\norm{{p}}^2 + \norm{\nabla_q U}^2)\\
				%		&=\frac{\epsilon \norm{\Phi}}{2} \norm{\nabla H}^2\\
				\leq\frac{\epsilon\norm{\Phi}\beta_{\max}^2}{2} \norm{x}^2.
			\end{split}
		\end{equation}
		Hence, from \eqref{H} and \eqref{C}, we get that 
		\begin{equation}\label{cond1}
			\begin{split}
				&k_1\norm{x}^2\leq S\leq k_2\norm{x}^2
			\end{split}
		\end{equation}
		where $k_1:=\frac{\beta_{min}-\epsilon\norm{\Phi}\beta_{\max}^2}{2}$, $k_2:=\frac{\beta_{max}+\epsilon\norm{\Phi}\beta_{\max}^2}{2}$ are positive constants. Furthermore, $\dot{S}=\dot{H}+\dot{C}$, where ${\dot{H}= -{p}^\top D {p}}$ and 
		\begin{equation*}
			\begin{split}
				\dot{C}= \epsilon&[-\nabla ^\top_q U A \Phi\nabla _q U+{p}^\top(J-D)^\top \Phi \nabla_q U\\
				&+{p}^\top \Phi\nabla^2_q U A{p}+{p}^\top \dot{\Phi}\nabla_q U].
			\end{split}
		\end{equation*}
		Thus,
		\begin{equation}\label{vdot}
			\begin{split}
				\dot{S}&= -\nabla^\top H \Upsilon\nabla H,\\
			\end{split}
		\end{equation}
		where the matrix $\Upsilon(q,p)$ is defined as
		\begin{equation}
			\Upsilon:=  \begin{bmatrix}
				\epsilon A\Phi &  0_{n\times n}\\ 
				\epsilon(J+D) \Phi-\epsilon\dot{\Phi}&D-\epsilon \Phi\nabla^2_q UA
			\end{bmatrix}.
		\end{equation}
		Recall that $H(q,p)$ has a locally isolated minimum at $x_\star$. Then, \eqref{vdot} implies that $\dot{S}(q,p)$ is \textit{locally negative definite} if there exist $\Phi(q),\epsilon$ such that $\Upsilon_{sym}(q,p)$ is positive definite. To prove the existence of such a pair, select $\Phi(q)=A^{\top}(q)$. Hence,
		\begin{equation}\label{upsilonsym2}
			\arraycolsep=0.8pt \def\arraystretch{1.4}
			\begin{array}{lll}
				\Upsilon_{sym}&= \begin{bmatrix}
					\Upsilon_{11}&\Upsilon_{12}\\
					\Upsilon_{12}^\top&\Upsilon_{22}
				\end{bmatrix},\\
				\Upsilon_{11}(q)&:=\epsilon (A(q)A(q)^\top+A(q)^\top A(q)),\\
				\Upsilon_{12}(q,p)&:=\frac{\epsilon}{2}[A(q)[D(q,p)-J(q,p)]-\dot{A}(q)],\\
				\Upsilon_{22}(q,p)&:=D(q,p)-\epsilon(A(q)^\top \nabla_q^2U(q)A(q)).
			\end{array}
		\end{equation} 
		To verify the sign of \eqref{upsilonsym2}, we employ a Schur complement analysis, i.e.,  $\Upsilon_{sym}(q,p)$ is positive definite if and only if its block (1,1) is positive definite and its Schur complement is positive definite, see \cite{horn2012matrix}. Note that $\Upsilon_{11}(q,p)>0$ and there exists a \textit{sufficiently small} $\epsilon>0$ such that the Schur complement of $\Upsilon_{sym}(q,p)$ is also positive definite. Therefore, for such an $\epsilon$, $\Upsilon_{sym}(q,p)$ is positive definite and $\dot{S}(q,p)$ is locally negative definite. Moreover, let $\mu(q,p)>0$ be the minimum eigenvalue of $\Upsilon_{sym}(q,p)$, then, it follows that
		\begin{equation}\label{vcond2}
			\dot{S}\leq -\mu \norm{\nabla H}^2\leq -\mu\beta_{\max}^2 \norm{x}^2.
		\end{equation}
		Therefore, from \eqref{cond1} and \eqref{vcond2}, $x_\star$ is a \textit{locally} exponentially stable equilibrium point for \eqref{phsys} (see Theorem 4.10 of \cite{khalil2002nonlinear}).
		
		To prove $(ii)$, note that if $U(q)$ is radially unbounded, then the Lyapunov candidate $S(q,p)$ is also radially unbounded, i.e, $S\to \infty$ as $\norm{q}\to \infty$ and $\norm{{p}}\to \infty$.
	\end{proof}
	
	Note that from \eqref{cond1} and \eqref{vcond2}, we get that
	\begin{equation*}
		\dot{S}\leq -\frac{2\beta_{\max}\mu}{1+\epsilon\norm{A}\beta_{\max}}S.
	\end{equation*}
	Furthermore, from the comparison lemma (see \cite{khalil2002nonlinear}), we get
	\begin{equation*}
		S\leq S_0\exp^{-\frac{2\beta_{\max}\mu}{1+\epsilon\norm{A}\beta_{\max}}t},
	\end{equation*}
	where $t\geq0$ is the time variable and $S_0$ is the Lyapunov function \eqref{candidate} evaluated at $t=0$. Then, we get that
	\begin{equation}\label{trajrate}
		\norm{x}\leq \sqrt{\frac{k_2}{k_1}}\norm{x_0}\exp^{-\frac{\beta_{\max}\mu}{1+\epsilon\norm{A}\beta_{\max}}t},
	\end{equation} 
	where $x_0\in \rea^{2n}$ corresponds to the initial conditions vector. Therefore, we establish the following result. 
	\begin{corollary}\label{cor1}
		The trajectories of \eqref{phsys} converge to the desired equilibrium $x_\star$ with a rate of convergence given by 
		\begin{equation}\label{rate}
			\frac{\beta_{\max}\mu}{1+\epsilon\norm{A}\beta_{\max}}.
		\end{equation}
	\end{corollary}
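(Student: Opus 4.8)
The plan is to convert the quadratic Lyapunov bounds established in the proof of Theorem~\ref{prop1} into an exponential decay estimate on the state norm, and then to read the convergence rate directly off the resulting exponent. I would start from the two inequalities already in hand: the quadratic sandwich $k_1\norm{x}^2\leq S\leq k_2\norm{x}^2$ from \eqref{cond1}, and the decay bound $\dot{S}\leq -\mu\beta_{\max}^2\norm{x}^2$ from \eqref{vcond2}. Together these reduce the problem to a scalar differential inequality in $S$, after which the comparison lemma does the rest.

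The first step is to eliminate $\norm{x}$ in favor of $S$. Using the upper bound $S\leq k_2\norm{x}^2$, equivalently $\norm{x}^2\geq S/k_2$, I would substitute into \eqref{vcond2} to obtain $\dot{S}\leq -(\mu\beta_{\max}^2/k_2)S$. Under the choice $\Phi=A^\top$ fixed in the proof of Theorem~\ref{prop1} one has $\norm{\Phi}=\norm{A}$, so $k_2=\tfrac{1}{2}\beta_{\max}(1+\epsilon\norm{A}\beta_{\max})$; inserting this value collapses the coefficient to exactly $\dot{S}\leq -\frac{2\beta_{\max}\mu}{1+\epsilon\norm{A}\beta_{\max}}S$. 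This algebraic reduction is precisely where the denominator appearing in \eqref{rate} comes from.

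The second step is to integrate this inequality and return to the state norm. Applying the comparison lemma (\cite{khalil2002nonlinear}) gives $S(t)\leq S_0\exp(-\frac{2\beta_{\max}\mu}{1+\epsilon\norm{A}\beta_{\max}}t)$, with $S_0$ the value of $S$ at the initial condition. Combining the lower bound $k_1\norm{x}^2\leq S$ with $S_0\leq k_2\norm{x_0}^2$ then yields $\norm{x}^2\leq\frac{k_2}{k_1}\norm{x_0}^2\exp(-\frac{2\beta_{\max}\mu}{1+\epsilon\norm{A}\beta_{\max}}t)$, and taking square roots recovers the estimate \eqref{trajrate}. The convergence rate is half the exponent governing $S$, namely $\frac{\beta_{\max}\mu}{1+\epsilon\norm{A}\beta_{\max}}$, which is the claimed quantity.

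Because all the substantive analysis is already completed inside Theorem~\ref{prop1}, I do not anticipate a genuine obstacle: the corollary follows by mechanically chaining the two established bounds through the comparison lemma. The only points demanding care are the algebraic simplification of $\mu\beta_{\max}^2/k_2$ using $\norm{\Phi}=\norm{A}$, and the factor of one-half picked up when passing from the decay rate of the quadratic-type quantity $S$ to that of $\norm{x}$ itself.
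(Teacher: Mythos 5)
Your proposal is correct and follows essentially the same route as the paper: chaining \eqref{cond1} and \eqref{vcond2} into the scalar differential inequality $\dot{S}\leq -\frac{2\beta_{\max}\mu}{1+\epsilon\norm{A}\beta_{\max}}S$, applying the comparison lemma, and halving the exponent when passing back to $\norm{x}$ via the quadratic bounds. The only difference is presentational---you spell out the substitution $\norm{x}^2\geq S/k_2$ and the identification $k_2=\tfrac{1}{2}\beta_{\max}(1+\epsilon\norm{A}\beta_{\max})$ under $\Phi=A^\top$, which the paper leaves implicit.
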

	
	\begin{remark}\label{rem1}
		To ease the presentation of the results of Section \ref{apps}, we have selected $\Phi(q):=A^{\top}(q)$ in the proof of Theorem \ref{prop1}. We remark that this selection is not unique. Indeed, another interesting option that verifies \eqref{condprop1} is $\Phi(q):=A^{-1}(q)$.
	\end{remark}

	\section{Exponential stabilization via PID-PBC}\label{apps}
	The stabilization of mechanical systems via PBC techniques has been extensively studied. In particular, PID-PBC \cite{zhang2017pid,borja2020,romero2018global} represents a constructive methodology to stabilize mechanical systems without solving partial differential equations. Moreover, in this approach, the control parameters may admit a physical interpretation.
	
	In this section, we apply the analysis proposed in Section II to establish conditions that guarantee the exponential stability of mechanical systems stabilized via PID-PBC of \cite{zhang2017pid,borja2020}. Towards this end, we consider mechanical systems that admit a pH representation of the form
	\begin{equation}\label{sysmec}
		\arraycolsep=1pt \def\arraystretch{1.4}
		\begin{array}{rcl}
			&\begin{bmatrix}
				\dot{\mathbb{q}}\\\dot{\mathbb{p}}
			\end{bmatrix} = \begin{bmatrix}0_{n\times n}&I_n\\-I_n &-\mathbb{D}(\mathbb{q,p})\end{bmatrix}
			\begin{bmatrix}\nabla_{\mathbb{q}}\mathbb{H}(\mathbb{q},\mathbb{p}) \\ \nabla_{\mathbb{p}}\mathbb{H}(\mathbb{q},\mathbb{p})\end{bmatrix}+ \begin{bmatrix}0_{n\times n} \\ G	\end{bmatrix}u \\
			&\mathbb{H}(\mathbb{q},\mathbb{p})=\displaystyle\frac{1}{2}\mathbb{p}^\top M^{-1}(\mathbb{q}) \mathbb{p}+\mathbb{U}(\mathbb{q}), ~
			y=G^\top M^{-1}{\mathbb{(q)p}}
		\end{array}
	\end{equation} 
	where $\mathbb{q},\mathbb{p} \in \rea^{n}$ are the generalized positions and momenta vectors, respectively, $\mathbb{H}:\rea^n\times\rea^n\to \rea_+$ is the Hamiltonian of the system, $\mathbb{U}:\rea^n\to\rea_+$ is the potential energy of the system, ${M:\rea^n\to\rea^{n\times n}}$ is the so-called mass inertia matrix, which is positive definite, ${\mathbb{D}:\rea^n\times\rea^n\to \rea^{n\times n}}$ is positive semi-definite and represents the natural damping of the system, $u,y \in \rea^{m}$ are the control and passive output vectors, respectively, $m\leq n$, and $G\in \rea^{n\times m}$ is the input vector with $rank(G)=m$, which is defined as 
	\begin{equation}
		G:=\begin{bmatrix}
			0_{\ell\times m}\\I_m
		\end{bmatrix}, \quad \ell:=n-m.
	\end{equation}
	The set of assignable equilibria for \eqref{sysmec} is defined by
	\begin{equation*}
		\mathcal{E}:=\{\mathbb{q},\mathbb{p} \in \rea^n \ | \ \mathbb{p}=0_n, \ G^\perp \nabla\mathbb{U(q)}=0_{\ell}\},
	\end{equation*}
	where $G^{\perp} := \left[I_{\ell} \ 0_{\ell\times m}\right]$.
		
	Proposition \ref{prop3} establishes that the PID-PBC proposed in \cite{borja2020} preserves the mechanical structure for the closed-loop system. This result is essential in the stability analysis presented in this section.
	\begin{proposition}\label{prop3} 
		Consider a mechanical system represented by \eqref{sysmec}, and the desired configuration $q_{\star}\in\rea^{n}$, such that $(q_{\star},0_{n})\in\mathcal{E}$. Define the PID-PBC controller
		\begin{equation}\label{pidpbc}
			u=-K_P y-K_I(G^\top \mathbb{q}+\kappa)-K_D\dot{y}
		\end{equation}
		where $K_I\in \rea^{m\times m}$ is positive definite, $K_P, K_D\in \rea^{m\times m}$ are positive semi-definite matrices, and $\kappa \in \rea^m$ is defined as
		\begin{equation}\label{kappa}
			\kappa:= -G^\top q_{\star} - K_{I}^{-1}G^\top\nabla\mathbb{U}(q_{\star}).
		\end{equation} 
		Then, the closed-loop system has a stable equilibrium point at $(q_{\star},0_{n})$ if there exists $K_{I}>0$ such that
		\begin{equation}\label{pidcond}
			\nabla^{2}\mathbb{U}(q_{\star})+GK_{I}G^\top>0.
		\end{equation} 
		Moreover, the closed-loop system takes the form 
		\begin{equation}\label{tg1}
			\begin{bmatrix}
				\dot{\mathbb{q}}\\\dot{\mathbb{p}}
			\end{bmatrix}=\mathbb{F}_{d}(\mathbb{q},\mathbb{p})\nabla \mathbb{H}_{d}(\mathbb{q},\mathbb{p})
		\end{equation} 
		with
		\begin{equation}\label{tg2}
			\arraycolsep=2pt \def\arraystretch{1.4}
			\begin{array}{rcl}
				\mathbb{F}_{d}(\mathbb{q},\mathbb{p})&:=& \begin{bmatrix}
					0_{n\times n}&M^{-1}(\mathbb{q})M_d(\mathbb{q})\\ -M_d(\mathbb{q})M^{-1}(\mathbb{q})& \mathbb{J(q,p)}-\mathbb{D}_{d}(\mathbb{q,p})
				\end{bmatrix}\\[0.5cm]	
				\mathbb{H}_d\mathbb{(q,p)}&:=& \displaystyle\frac{1}{2}\mathbb{p}^\top M_{d}^{-1}(\mathbb{q})\mathbb{p}+\mathbb{U}_d\mathbb{(q)}\\
				\mathbb{U}_{d}\mathbb{(q)}&=&\displaystyle\frac{1}{2}(G^\top \mathbb{q}+\kappa)^\top K_I (G^\top \mathbb{q}+\kappa)+\mathbb{U(q)} \\[0.2cm]
				M_d(\mathbb{q})&=&M(\mathbb{q})\left[M(\mathbb{q})+GK_DG^\top\right]^{-1}M(\mathbb{q})\\
				\mathbb{J(q,p)}&=&\mathbb{E}^{-1}\mathbb{(q)}\left[\mathbb{B^\top(q,p)}-\mathbb{B(q,p)}\right]\mathbb{E^{-\top}(q)}\\
				\mathbb{D}_{d}\mathbb{(q,p)}&=&\mathbb{E}^{-1}\mathbb{(q)}(\mathbb{D}+GK_PG^\top)\mathbb{E^{-\top}(q)},\\
				\mathbb{B(q,p)}&:=&GK_D\left( \nabla_{\mathbb{q}} y \right)^{\top},~
				\mathbb{E(q)}:=M(\mathbb{q})M_d^{-1}(\mathbb{q}).		
			\end{array}
		\end{equation} 
	\end{proposition}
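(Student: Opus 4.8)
The plan is to derive the closed-loop dynamics by substituting \eqref{pidpbc} into \eqref{sysmec}, treating the integral, proportional and derivative actions in turn, and then to conclude stability using the shaped Hamiltonian $\mathbb{H}_d$ of \eqref{tg2} as a Lyapunov function. I would begin by checking that $(q_\star,0_n)$ is an equilibrium of the closed loop. The integral term $-GK_I(G^\top\mathbb{q}+\kappa)$ is the $\mathbb{q}$-gradient of the quadratic $\tfrac{1}{2}(G^\top\mathbb{q}+\kappa)^\top K_I(G^\top\mathbb{q}+\kappa)$, so it shapes the open-loop potential $\mathbb{U}$ into $\mathbb{U}_d$. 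Evaluating $\nabla_{\mathbb{q}}\mathbb{U}_d$ at $q_\star$ and inserting the definition \eqref{kappa} of $\kappa$ collapses the gradient to $(I_n-GG^\top)\nabla\mathbb{U}(q_\star)$, whose only possibly nonzero block is $G^\perp\nabla\mathbb{U}(q_\star)$; this vanishes precisely because $(q_\star,0_n)\in\mathcal{E}$, so $q_\star$ is a critical point of $\mathbb{U}_d$ and $(q_\star,0_n)$ an equilibrium.

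Next I would account for the proportional and derivative actions. The proportional term $-GK_Py=-GK_PG^\top\nabla_{\mathbb{p}}\mathbb{H}$ adds $GK_PG^\top$ to the dissipation, which is the source of the $\mathbb{D}+GK_PG^\top$ inside $\mathbb{D}_d$. The derivative action is the crux and is where I expect the main obstacle. Since $y=G^\top M^{-1}\mathbb{p}$, the term $-K_D\dot y$ involves $\dot{\mathbb{p}}$, and $\dot{\mathbb{p}}$ itself contains $u$; substitution therefore produces an implicit algebraic loop of the form $(I_n+GK_DG^\top M^{-1})\dot{\mathbb{p}}=\cdots$. The resolution is to recognize the matrix acting on $\dot{\mathbb{p}}$ as $\mathbb{E}=I_n+GK_DG^\top M^{-1}=MM_d^{-1}$ with $M_d$ as in \eqref{tg2}, invert it, and collect terms. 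The position equation $\dot{\mathbb{q}}=M^{-1}\mathbb{p}$ is unchanged but is rewritten as $M^{-1}M_d\nabla_{\mathbb{p}}\mathbb{H}_d$ using $\nabla_{\mathbb{p}}\mathbb{H}_d=M_d^{-1}\mathbb{p}$, producing the off-diagonal blocks $M^{-1}M_d$ and $-M_dM^{-1}$, while the configuration dependence of $M^{-1}$ entering through $\nabla_{\mathbb{q}}y$ produces $\mathbb{B}$ and, after symmetrization, the skew-symmetric interconnection $\mathbb{J}=\mathbb{E}^{-1}(\mathbb{B}^\top-\mathbb{B})\mathbb{E}^{-\top}$. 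The bulk of the work is the bookkeeping that packages every contribution into the single factorization $\dot x=\mathbb{F}_d\nabla\mathbb{H}_d$ of \eqref{tg1}--\eqref{tg2}; I would carry it out by left-multiplying the momentum equation by $\mathbb{E}^{-1}$ and matching blocks.

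For the stability claim I would use $\mathbb{H}_d$ as the Lyapunov candidate. At $(q_\star,0_n)$ its Hessian is block diagonal with blocks $\nabla^2\mathbb{U}(q_\star)+GK_IG^\top$ and $M_d^{-1}(q_\star)$, the first positive definite exactly under condition \eqref{pidcond} and the second because $M_d>0$; hence $\mathbb{H}_d$ has a strict local minimum at the equilibrium. Differentiating along \eqref{tg1} and using the skew-symmetry of $\mathbb{J}$ together with the fact that the off-diagonal blocks of $\mathbb{F}_d$ are negative transposes of each other, all indefinite cross terms cancel and leave $\dot{\mathbb{H}}_d=-\nabla_{\mathbb{p}}^\top\mathbb{H}_d\,\mathbb{D}_d\,\nabla_{\mathbb{p}}\mathbb{H}_d\le 0$, since $\mathbb{D}_d=\mathbb{E}^{-1}(\mathbb{D}+GK_PG^\top)\mathbb{E}^{-\top}\ge 0$. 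Stability of $(q_\star,0_n)$ then follows from Lyapunov's theorem.
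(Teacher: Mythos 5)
Your proposal is correct and follows essentially the same route as the paper: substitute the controller, resolve the algebraic loop from the $-K_D\dot y$ term via $\mathbb{E}=I_n+GK_DG^\top M^{-1}=MM_d^{-1}$, repackage the dynamics as $\dot x=\mathbb{F}_d\nabla\mathbb{H}_d$, and conclude stability from $\dot{\mathbb{H}}_d\le 0$ together with the strict local minimum of $\mathbb{H}_d$ guaranteed by \eqref{pidcond}. The only cosmetic difference is that the paper organizes your block bookkeeping through the congruence $\mathbb{F}_d=\Gamma^{-1}\bar{\mathbb{F}}\Gamma^{-\top}$ with $\Gamma=\left[\begin{smallmatrix} I_n & 0 \\ \mathbb{B} & \mathbb{E}\end{smallmatrix}\right]$, which makes $\mathbb{F}_{d,sym}\le 0$ immediate, whereas you verify the dissipativity block by block; note also that your Hessian block $\nabla^2\mathbb{U}(q_\star)+GK_IG^\top$ is the dimensionally correct form of what the paper writes as $\nabla^2\mathbb{U}(q_\star)+K_I$.
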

	$\square$\\	
	%%%%%%%%%%%%%%%%%%%%%
	\begin{proof}
		Substituting \eqref{pidpbc} into \eqref{sysmec}, we obtain
		\begin{equation}\label{eq1pro3}
			% \arraycolsep=1pt \def\arraystretch{1.4}
			%  \begin{array}{rcl}
			\begin{bmatrix}
				\dot{\mathbb{q}}\\\dot{\mathbb{p}}
			\end{bmatrix}=\bar{\mathbb{F}}\begin{bmatrix}
				\nabla_{\mathbb{q}}\bar{\mathbb{H}} \\ \nabla_{\mathbb{p}}\bar{\mathbb{H}} 		                                
			\end{bmatrix}-\begin{bmatrix}
				0_{n\times m} \\ G
			\end{bmatrix}K_{D}\dot{y}
		\end{equation} 
		where
		\begin{equation*}
			\begin{array}{lll}
				\bar{\mathbb{F}}(\mathbb{q,p})&:=&\begin{bmatrix}
					0_{n\times n} & I_{n} \\ -I_{n} & -\left( \mathbb{D} + GK_{P}G^{\top} \right)
				\end{bmatrix} \\[0.3cm]
				\bar{\mathbb{H}}(\mathbb{q,p})&:=&  \mathbb{U}_d + \displaystyle\frac{1}{2}\mathbb{p}^{\top}M^{-1}(\mathbb{q})\mathbb{p}.                 
			\end{array}
		\end{equation*} 
		Note that
		\begin{equation}\label{eq2pro3}
			\begin{bmatrix}
				\nabla_{\mathbb{q}}\mathbb{H}_{d} \\ \nabla_{\mathbb{p}}\mathbb{H}_{d}
			\end{bmatrix}
			= \begin{bmatrix}
				I_{n} & \left(\nabla_{\mathbb{q}}y\right) K_{D}G^{\top}\\ 0_{n\times n} & I_{n}+\left(\nabla_{\mathbb{p}}y\right) K_{D}G^{\top} 
			\end{bmatrix} \begin{bmatrix}
				\nabla_{\mathbb{q}}\bar{\mathbb{H}} \\ \nabla_{\mathbb{p}}\bar{\mathbb{H}}
			\end{bmatrix}
		\end{equation} 
		with $\nabla_{\mathbb{p}}y = M^{-1}(\mathbb{q})G$.
		Moreover, some manipulations show that $\mathbb{E}(\mathbb{q}) = I_{n}+GK_{D}G^{\top}M^{-1}(\mathbb{q}),$
		% 	\begin{equation*}
		% 		\mathbb{E} = I_{n}+GK_{D}G^{\top}M^{-1},
		% 	\end{equation*} 
		which has \textit{full rank}.
		Hence, \eqref{eq2pro3} can be rewritten as
		\begin{equation}\label{eq3pro3}
			\nabla\mathbb{H}_{d} = \Gamma^{\top}\mathbb{(q,p)} \nabla\bar{\mathbb{H}}, ~		\Gamma\mathbb{(q,p)}: = \begin{bmatrix}
				I_{n} & 0_{n\times n} \\  \mathbb{B(q,p)} & \mathbb{E(q)}                
			\end{bmatrix}.
		\end{equation}
		Note that, since $\mathbb{E}(\mathbb{q})$ has full rank, $\Gamma(\mathbb{q,p})$ is invertible.
		On the other hand,
		\begin{equation*}
			\dot{y} = \left( \nabla_{\mathbb{q}} y \right)^{\top}\dot{\mathbb{q}} + \left( \nabla_{\mathbb{p}} y \right)^{\top}\dot{\mathbb{p}}.
		\end{equation*}
		Thus, \eqref{eq1pro3} is equivalent to
		\begin{equation}
			\Gamma\begin{bmatrix}
				\dot{\mathbb{q}}\\\dot{\mathbb{p}}
			\end{bmatrix}=\bar{\mathbb{F}}\nabla\bar{\mathbb{H}}.
		\end{equation} 
		Therefore, from \eqref{eq3pro3}, it follows that
		\begin{equation}
			\begin{array}{lll}
				\begin{bmatrix}
					\dot{\mathbb{q}}\\\dot{\mathbb{p}}
				\end{bmatrix}=\Gamma^{-1}\bar{\mathbb{F}}\Gamma^{-\top}\nabla{\mathbb{H}}_{d} ={\mathbb{F}}_{d}\nabla{\mathbb{H}}_{d}.
			\end{array}
		\end{equation} 
		Note that, ${\mathbb{F}}_{d_{sym}}(\mathbb{q,p})\leq 0$. Thus, $\dot{\mathbb{H}}_{d}\leq 0$.
		Additionally,
		\begin{eqnarray}
			\nabla\mathbb{H}_{d}(q_{\star},0_{n})& = &\begin{bmatrix}
				\mathbb{\nabla_{q}H}_{d}(q_{\star},0_{n}) \\ 0_{n}                         
			\end{bmatrix}=0_{2n}	\label{eq5pro2} 
		\end{eqnarray} 
		where we have used \eqref{kappa}. Furthermore,
		\begin{equation*}
			\nabla^{2}_\mathbb{{q}}\mathbb{H}_{d}(q_{\star},0_{n}) = \begin{bmatrix}
				\nabla^{2}\mathbb{U}(q_{\star})+K_{I} & 0_{n\times n} \\  0_{n\times n} & M_{d}^{-1}(q_{\star})                                      
			\end{bmatrix}.
		\end{equation*} 
		Hence, \eqref{pidcond} guarantees that
		\begin{equation}\label{eq6pro2}
			\nabla^{2}_\mathbb{{q}}\mathbb{H}_{d}(q_{\star},0_{n})>0.
		\end{equation} 
		The expressions \eqref{eq5pro2} and \eqref{eq6pro2} imply that $\mathbb{H}_{d}(\mathbb{q,p})$ has a locally isolated minimum at $(q_{\star},0_{n})$, which in combination with \eqref{eq3pro3} prove the stability of the equilibrium for the closed-loop system.
	\end{proof}
	\begin{remark}
		For \textit{fully actuated} mechanical systems, the control law \eqref{pidpbc} can be modified as follows
		\begin{equation*}
			u=\nabla_{\mathbb{q}}\mathbb{U(q)}-K_P y-K_I(G^\top \mathbb{q}-q_{\star})-K_D\dot{y},
		\end{equation*} 
		where the first term compensates the gravity effects.
	\end{remark}
	\begin{remark}\label{pidclass}
		The PID-PBC scheme is applied to the passive output signal--which for mechanical systems correspond to the actuated velocities. On the other hand, the classical PID controller is applied to an error signal--which for mechanical systems is customarily given by the error between the actual and the desired position of the system. For some cases, a PI-PBC scheme coincides with the classical PD controller.
	\end{remark}
	\subsection{Proving the ES properties of the closed-loop system}\label{sec:ES}
	As it is shown in the previous section, the stabilization of a mechanical system via PID-PBC yields a new mechanical system described by \eqref{tg1}-\eqref{tg2}. In this section, we prove that, under some mild conditions, $(q_\star,0_n)$ is an \textit{exponentially stable} equilibrium point for the resulting closed-loop mechanical system. To this end, we introduce the change of coordinates described in \cite{venkatraman2010speed} 
	\begin{equation}\label{change}
		p:= T_d^\top(\mathbb{q})\mathbb{p},\quad q:=\mathbb{q}-q_\star,
	\end{equation}
	where $T_d:\rea^{n}\to \rea^{n\times n}$ is the upper Cholesky factor of $M_d^{-1}(\mathbb{q})$, i.e., $T_d(\mathbb{q})$ is a \textit{full rank} \textit{upper triangular} matrix with strictly positive diagonal entries such that $	M_d^{-1}(\mathbb{q})=T_d(\mathbb{q})T_d^\top (\mathbb{q})$. Hence, by using the change of coordinates \eqref{change} and the results of Theorem \ref{prop1}, the following proposition establishes conditions that guarantee that $(q_\star,0_n)$ is an exponentially stable equilibrium point for \eqref{tg1}-\eqref{tg2}.		
	
	%%%%%%%%%
	\begin{proposition}\label{prop4}
		Consider the system \eqref{tg1}-\eqref{tg2}. Then, its equilbrium point $(q_\star,0_n)$ is ES if:
		\begin{itemize}[wide=0pt]
			\item[\textbf{C1}] $\mathbb{U}_{d}(\mathbb{q})$ is \textit{strongly convex}.
			\item[\textbf{C2}] $\lVert M^{-1}(\mathbb{q})M_d(\mathbb{q}) \rVert<\infty$.
			\item[\textbf{C3}] $\mathbb{D}_{d}(\mathbb{q,p})>0$.
		\end{itemize}
		Moreover, the equilbrium is GES if $\mathbb{U}_{d}(\mathbb{q})$ is \textit{radially unbounded}.
	\end{proposition}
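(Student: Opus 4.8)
The plan is to push the closed-loop system \eqref{tg1}--\eqref{tg2} through the change of coordinates \eqref{change} until it matches the template \eqref{phsys}, and then apply Theorem \ref{prop1} directly. First I would rewrite the energy: since $M_d^{-1}(\mathbb{q})=T_d(\mathbb{q})T_d^\top(\mathbb{q})$ and $p=T_d^\top(\mathbb{q})\mathbb{p}$, the kinetic term collapses to $\tfrac{1}{2}\mathbb{p}^\top M_d^{-1}\mathbb{p}=\tfrac{1}{2}\norm{p}^2$, so in the new variables $H(q,p)=\tfrac{1}{2}p^\top p+U(q)$ with $U(q):=\mathbb{U}_d(q+q_\star)$, which is exactly the Hamiltonian required by \eqref{phsys}. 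Note also that $H=\mathbb{H}_d$ as a function value, a fact I will exploit below.

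Next I would derive the transformed dynamics. From $\dot{\mathbb{q}}=M^{-1}M_d\nabla_{\mathbb{p}}\mathbb{H}_d=M^{-1}\mathbb{p}$ and $\mathbb{p}=T_d^{-\top}p$ one obtains $\dot q=\dot{\mathbb{q}}=M^{-1}T_d^{-\top}p=A(q)\nabla_pH$, identifying $A(q):=M^{-1}(\mathbb{q})T_d^{-\top}(\mathbb{q})$, which is full rank because $M$ and $T_d$ are. For the momentum equation I would substitute $\dot p=\dot T_d^\top\mathbb{p}+T_d^\top\dot{\mathbb{p}}$ and isolate the potential-gradient term; using $M_d=T_d^{-\top}T_d^{-1}$ one checks $T_d^\top M_dM^{-1}=T_d^{-1}M^{-1}=A^\top$, so that term equals precisely $-A^\top\nabla_qU=-A^\top\nabla_qH$, as \eqref{phsys} demands. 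Rather than grinding out the remaining Coriolis-type terms, I would fix the dissipative part by an energy balance: since $H=\mathbb{H}_d$, we have $\dot H=\dot{\mathbb{H}}_d=-\mathbb{p}^\top M_d^{-1}\mathbb{D}_dM_d^{-1}\mathbb{p}=-p^\top(T_d^\top\mathbb{D}_dT_d)\,p$, where I used $M_d^{-1}T_d^{-\top}=T_d$. Comparing with $\dot H=-p^\top D\,p$ in \eqref{phsys} forces $D(q,p)=T_d^\top(\mathbb{q})\mathbb{D}_d(\mathbb{q},\mathbb{p})T_d(\mathbb{q})$; all leftover momentum terms are energy-conserving and are collected into a skew-symmetric $J(q,p)$, so the system sits exactly in the form \eqref{phsys}. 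Structural preservation of this kind is the PLvCC result of \cite{venkatraman2010speed}.

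With the system in the form \eqref{phsys}, I would verify the hypotheses of Theorem \ref{prop1}. Condition \textbf{C1} gives that $U(q)=\mathbb{U}_d(q+q_\star)$ is strongly convex with $\nabla U(0)=\nabla\mathbb{U}_d(q_\star)=0_n$ (the latter holding at the equilibrium), so Assumption \ref{ass1} is satisfied. Condition \textbf{C3}, namely $\mathbb{D}_d>0$, makes $D=T_d^\top\mathbb{D}_dT_d$ positive definite by congruence. For Assumption \ref{ass2} I would use the identity $A=M^{-1}M_dT_d$ together with \textbf{C2}, $\norm{M^{-1}M_d}<\infty$, and the boundedness of the Cholesky factor $T_d$ to conclude $\norm{A(q)}<\infty$, and the boundedness of $\mathbb{D}_d$ and $T_d$ to conclude $\norm{D(q,p)}<\infty$. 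Condition \eqref{condprop1} holds automatically for the choice $\Phi(q)=A^\top(q)$, since $A$ is full rank. Theorem \ref{prop1}$(i)$ then yields that $x_\star=0_{2n}$ is ES in the new coordinates; because \eqref{change} is a diffeomorphism with $(q,p)=(0,0)$ iff $(\mathbb{q},\mathbb{p})=(q_\star,0_n)$, the equilibrium $(q_\star,0_n)$ of \eqref{tg1}--\eqref{tg2} is ES. For the global claim, radial unboundedness of $\mathbb{U}_d$ transfers to $U$, so Theorem \ref{prop1}$(ii)$ gives GES, which the global diffeomorphism \eqref{change} preserves.

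I expect the main obstacle to be the rigorous identification of the transformed momentum equation with \eqref{phsys}---specifically, confirming that the terms generated by $\dot T_d$ and by the gradient of the kinetic energy contribute only to the skew-symmetric $J$ and not to the dissipation. The energy-balance shortcut $\dot H=\dot{\mathbb{H}}_d$ pins down $D=T_d^\top\mathbb{D}_dT_d$ directly and sidesteps the brute-force Coriolis computation, but one must still argue the leftover terms are genuinely skew-symmetric, which is exactly where the structure-preserving property of \cite{venkatraman2010speed} is invoked. A secondary technical point is making the Assumption \ref{ass2} bounds precise, since \textbf{C2} controls $\norm{M^{-1}M_d}$ rather than $\norm{A}$ itself; continuity handles this on a compact neighborhood for the local statement, whereas the global statement additionally relies on the uniform regularity (uniform positive definiteness and boundedness) of the inertia and Cholesky factors.
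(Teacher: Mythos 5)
Your proposal is correct and takes essentially the same route as the paper: transform \eqref{tg1}--\eqref{tg2} into the form \eqref{phsys} via \eqref{change}, identify $U(q)=\mathbb{U}_d(q+q_\star)$, $A=M^{-1}T_d^{-\top}$, $D=T_d^\top\mathbb{D}_d T_d$, map \textbf{C1}--\textbf{C3} onto Assumption \ref{ass1}, Assumption \ref{ass2}, and $D>0$, and then invoke Theorem \ref{prop1} (with radial unboundedness giving GES). The only difference is expositional: the paper quotes the explicit transformed structure \eqref{def2} (including the skew term $J_3$) from \cite{venkatraman2010speed}, whereas you re-derive $A$ and $D$ directly and pin down the dissipation by an energy balance, deferring--as you correctly note--the skew-symmetry of the leftover terms to the same structure-preservation result of \cite{venkatraman2010speed}.
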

	%%%%%%%%%%%%%%%%%%%%
	\begin{proof}
		Note that the change of coordinates \eqref{change} transforms \eqref{tg1}-\eqref{tg2} into \eqref{phsys}, with 
		\begin{equation}\label{def2}
			\begin{split}
				U(q):=&\mathbb{U}_{d}(q+q_\star),\\
				A(q):=&\bar{M}^{-1}(q)\bar{T}_d(q)^{-\top},\\
				D(q,p):=&\bar{T}_d^\top(q) \bar{\mathbb{D}}_d{(q,p)} \bar{T}_d(q),\\
				J(q,p):=&J_3(q,p)+\bar{T}_d^\top(q) \bar{\mathbb{J}}(q,p)\bar{T}_d(q),\\
				J_3(q,p):=&\sum_{i=1}^n\Bigg[\Bigg({p}^\top \bar{T}_d^{-1}(q)\frac{\partial \bar{T}_d(q)}{\partial q_i}\Bigg)^\top (A^\top(q) e_i)^\top,\\
				&-(A^\top(q) e_i)\Bigg(\;{p}^\top \bar{T}_d^{-1}(q)\frac{\partial \bar{T}_d(q)}{\partial q}\Bigg)\Bigg],\\
			\end{split}
		\end{equation}
		where 
		$$
		\begin{array}{lll}
			&\bar{M}(q):=M(q+q_\star),\bar{\mathbb{J}}(q,p):=\mathbb{J}(q+q_\star,\bar{T}_d^{-\top} (q)p)\\
			&\bar{T}_d(q):=T_d(q+q_\star),\bar{\mathbb{D}}_d(q,p):=\mathbb{D}_d(q+q_\star,\bar{T}_d^{-\top}(q)p).
		\end{array}
		$$
		On the other hand, \textbf{C1} implies that Assumption \ref{ass1} is satisfied. Moreover, from \textbf{C2}, we have the following chain of implications
		\begin{equation*}
			\begin{array}{lll}
				&\lVert M^{-1}(\mathbb{q})M_d(\mathbb{q}) \rVert<\infty \implies \lVert M_d(\mathbb{q}) \rVert<\infty \\
				&\implies \lVert T_d(\mathbb{q}) \rVert<\infty \implies  \lVert  \bar{T}_d(q) \rVert<\infty \\
				&\implies \lVert A(q) \rVert<\infty.
			\end{array}
		\end{equation*} 
		Thus, Assumption \ref{ass2} is satisfied. The rest of the proof follows from Theorem \ref{prop1}, noting that \textbf{C3} implies that $D(q,p)>0$.
	\end{proof}
	
	\begin{remark}
		\textbf{C2}--thus, Assumption \ref{ass2}--is not restrictive from a physical point of view. Some simple computations show that \textbf{C2} reduces to $\lVert M(\mathbb{q})\rVert<\infty$. The class of robot manipulators that verify this condition is provided in \cite{ghorbel1993positive}.
	\end{remark}
	
	\begin{remark}
		The term $C(q,p)$ can be regarded as a \textit{virtual energy} term. Particularly, for ${\Phi(q)=A^\top(q)}$, such a term reduces to $C(q,p) = \epsilon \dot{U}_d(q)$.
		% 	\\\centerline{$C(q,p)= \epsilon \dot{q}^\top \nabla_q U(q) = \epsilon \dot{U}_d(q)$}.\\	
		Hence, considering $\epsilon$ as a constant with unit of seconds $[s]$, we have that $C(q,p)$ is expressed in units of Joules $[J]$ (energy).
	\end{remark}
	
	\subsection{Discussion}
	In Theorem \ref{prop1} it is required that $D(q,p)>0$. Note that, this condition is translated to \textbf{C3} in Proposition \ref{prop4}. For some particular cases, additional constraints on the natural damping of the system $\mathbb{D(q,p)}$ are needed to verify the mentioned requirement.  From a physical perspective, \textbf{C3} is not restrictive as  dissipation--in this case the natural damping--is inherent to the nature of mechanical systems. However, this physical phenomenon is usually neglected to simplify the mathematical modeling of the system under study. Then, to determine the range of applicability of the ES analysis exposed in Section \ref{sec:ES}, we analyze two particular cases of interest:
	\begin{enumerate}[label=(\roman*),wide=0pt]
		\item \textbf{Fully-actuated mechanical systems ($m=n$)}: if $\mathbb{D}\mathbb{(q,p)}=0_{n\times n}$, then the control parameters are chosen such that $\mathbb{D}_d\mathbb{(q,p)}>0$, i.e., $K_P>0$ . Hence, the application of results given in Proposition \ref{prop4} is straightforward.
		\item \textbf{Underactuated mechanical systems ($m<n$)}: when the open-loop system is underactuated, it is necessary to impose some conditions on $\mathbb{D}\mathbb{(q,p)}$ to ensure that $\mathbb{D}_d\mathbb{(q,p)}>0$. Such a condition is $G^\perp\mathbb{D(q,p)} (G^\perp)^\top>0.$
		Then, for $K_{P}>0$, we have
			$$\mathbb{D}\mathbb{(q,p)} + GK_{P}G^{\top}>0 \implies \mathbb{D}_d\mathbb{(q,p)}>0.$$
			\vspace{-4mm}
	\end{enumerate}

	\section{A tuning guideline}\label{sec:tune}
	As shown in the ES analysis provided in Section \ref{main}, the trajectories of systems represented as in \eqref{phsys} have a rate of convergence given by \eqref{rate}. Hence, by proving ES properties for \eqref{tg1}-\eqref{tg2} via such an analysis, the rate of convergence of mechanical systems stabilized by PID-PBC is also given by \eqref{rate}. 
	
	Since the terms $\beta_{max}$, $\norm{A(q)}$, and $\mu(q,p)$ are associated directly with the potential energy, the kinetic energy, and the damping of the system, respectively,  \eqref{rate} provides insight into how the mentioned physical quantities affect the rate of convergence. This intuition can be exploited to design the control parameters of PID-PBC.  Thus, by considering that each control parameter from these methodologies is associated with at least one physical property of the closed-loop system, we investigate the effect of such parameters on the rate of convergence via the expression \eqref{rate}. Accordingly, we can select the control parameters related to PID-PBC such that closed-loop system \eqref{tg1}-\eqref{tg2} has a prescribed performance in terms of its rate of convergence. 
	
	Note that the rate of convergence of the closed-loop system is given in terms of four elements, namely, $\beta_{\max}$, $\norm{A(q)}$, $\mu(q,p)$, and $\epsilon$. Therefore, a guideline to establish a relationship between these elements and the control parameters of PID-PBC is given as follows:
	\begin{itemize}[wide=0pt]
		\item $\beta_{max}$: the rate of convergence is proportional to $\beta_{max}$ since $\beta_{\max}:=\max\{1,\lambda_{\max}(G K_I G^\top)\}.$
		\item $\norm{A}$: note that from \eqref{def2}, we get
		\begin{equation}\label{norma}
			\norm{A}=\norm{\bar{M}^{-1}\bar{T}_d^{-\top}}\leq \norm{\bar{M}^{-1}}\norm{\bar{T}_d^{-\top}}.
		\end{equation}
		Therefore, from \eqref{rate} and \eqref{norma}, the rate of convergence can be increased by increasing $\norm{\bar{T}_d^{-\top}}$. This term is related to $K_D$, see \eqref{tg2}.
		\item  $\epsilon$ and $\mu(q,p)$:  the expression \eqref{rate} provides intuition of the effect of the control parameters even without performing the cumbersome computations of $\epsilon$ and $\mu(q,p)$. For example, the Schur complement of \eqref{upsilonsym2} is given by $Z-Y^\top X^{-1} Y$ with
		\begin{equation}
			\begin{split}
				X(q)&:=\epsilon \bar{M}^{-1}(q)\bar{M}_d(q)\bar{M}^{-1}(q)\\
				Y(q,p)&:=\frac{\epsilon}{2}[A(q)(D(q,p)-J(q,p))-\dot{A}(q)]\\
				Z(q,p)&:=D-\epsilon(A^\top(q) \nabla_q^2U(q)A(q)).
			\end{split}
		\end{equation}
		Note that $Y(q,p)$ increases as $D(q,p)$ increases. Then, it follows that $\epsilon$ must be reduced to guarantee that $Z(q,p)>0$ and to ensure that the Schur complement is positive definite as well.  
	\end{itemize}

	\section{Experimental Results}
	\begin{table}[t]
		\caption{Tuning Gains}\label{gains}
		\centering
		\vspace{-3mm}
	\begin{tabular}{llll}
	\hline
	& \multicolumn{1}{c}{$K_P$} & \multicolumn{1}{c}{$K_I$} \\ \hline
	S1 & diag(5,15,20)            & diag(200,250,350)         \\
	S2 & diag(5,15,20)            & diag(200,250,200)     \\
	S3 & diag(5,1,20)            & diag(200,250,350)         \\ \hline
\end{tabular}
	\end{table}
	In this section, we illustrate the applicability of \eqref{rate} as a tuning guideline by showing the effect of modifying $\beta_{\max}$ and $\mu(q,p)$ on the rate of convergence of a mechanical system.  To this end, we implement a PID-PBC to stabilize the Philips Experimental Robotic Arm (PERA), depicted in Fig.~\ref{pera}. The PERA is a seven degrees-of-freedom (DoF) experimental robotic arm created by Philips Applied Technologies \cite{rijs2010philips} to mimic the motion of a human arm. To ease the presentation of our results, we reduce the model to three DoF, namely, 
		\begin{itemize}[wide=0pt]
		\item The yaw shoulder joint $Y_S$ with angle $\mathbb{q}_1$.
		\item The pitch elbow joint $P_E$ with angle $\mathbb{q}_2$.
		\item The yaw elbow joint $Y_E$ with angle $\mathbb{q}_3$.
	\end{itemize}
	\begin{figure}[t]
		\centering
		\includegraphics[width=0.2\textwidth]{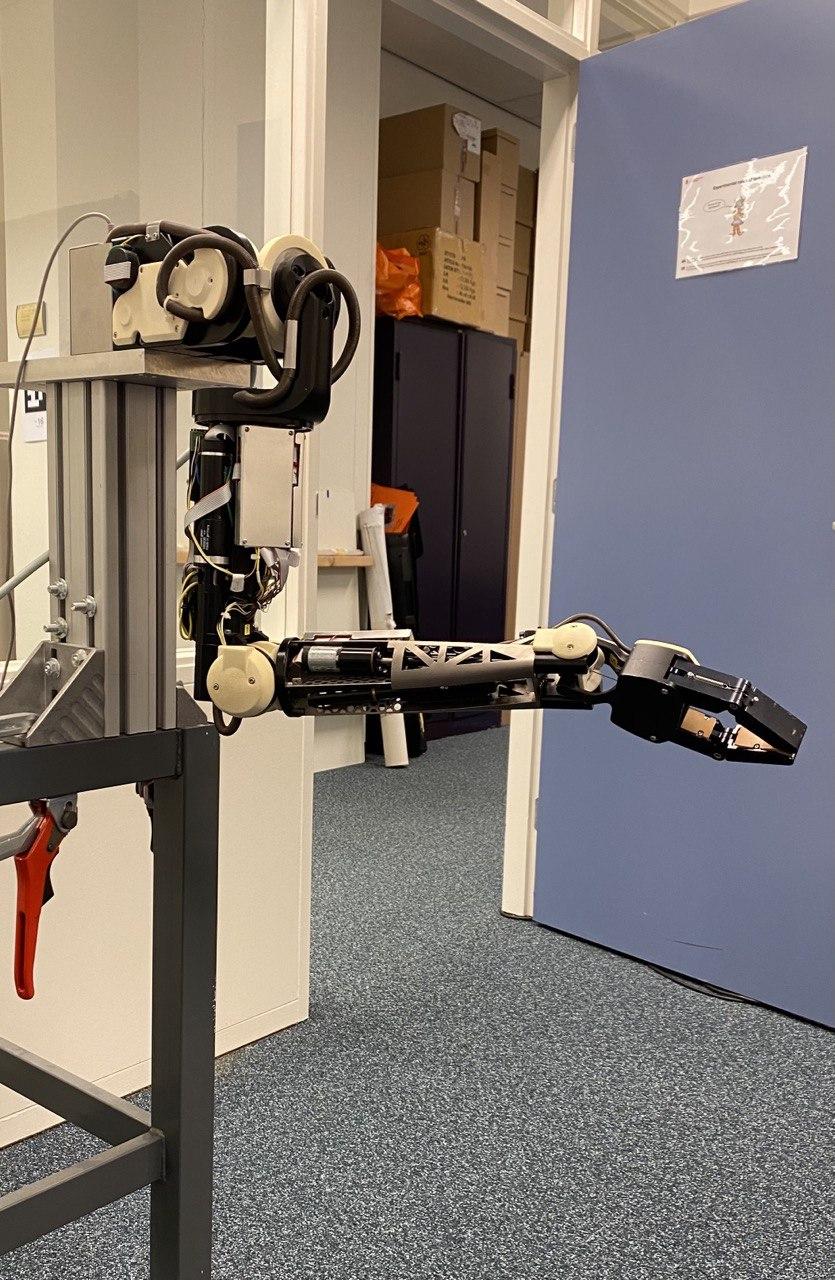}
		\vspace{-2mm}
		\caption{Experimental Setup: Philips Experimental Robotic Arm (PERA)}\label{pera}
			\vspace{-2mm}
	\end{figure}
	The PERA system can be modeled by \eqref{sysmec}, with $n=m=3$, $G=I_3$,$\mathbb{U(q)}=m_2d_{c2}g(1-\cos(\mathbb{q}_2))$, and 
	\begin{equation*}
		\begin{array}{ll}
				M(\mathbb{q})=\begin{bmatrix}
				m_{11}&0&m_{13}\\
				0&m_{22}&0\\
				m_{13}^\top&0&m_{33}
			\end{bmatrix}
		\end{array}
	\end{equation*}
	with
		\begin{equation*}
			\begin{array}{ll}
				m_{11}:=\displaystyle\sum_{i=1}^{3}I_{i}+m_2d_{c2}^2\sin^2(\mathbb{q}_2), &
			m_{13}:=I_3\cos(\mathbb{q}_2),\\
					m_{22}:=\displaystyle\sum_{j=2}^{3}I_{j}+m_2d_{c2}^2, &
				m_{33}:=I_3,
			\end{array}
	\end{equation*}
	where $I_1$, $I_2$, and $I_3$ correspond to the moments of inertia of the joints $Y_S$, $P_E$, and $Y_E$, respectively\footnote{These values are not relevant for the analysis provided in this manuscript.}, $m_2=1\ kg$ is the mass of the link composed of the elbow and wrist, $d_{c2}=0.16\ m$ is the distance to the center of mass of $m_2$, and $g=9.81 m/s^2$ is the gravity.
	We stabilize the PERA at the desired configuration ${q_\star=col(-1.8, 1.57, 0.78)}$ with three set of tuning gains, namely, $S_1$, $S_2$, and $S_3$. In the three cases, we select $K_D=0_{3\times 3}$, which corresponds to a PI-PBC scheme. The rest of the gains are shown in Table \ref{gains}. 

	A video of the experimental results can be found in: \url{https://youtu.be/-ty0D8VKQMs}.
	\begin{figure}[t]
	\centering
	\includegraphics[width=\columnwidth]{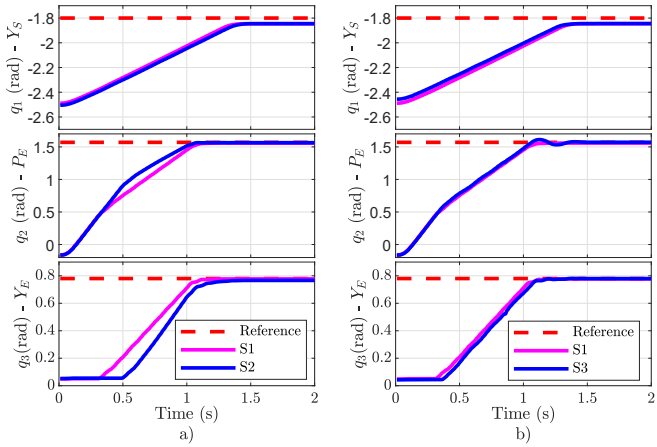}
	\caption{Comparison of trajectories of angular position: \\a) S1 vs. S2. b) S1 vs. S3.}	\label{res}
	\end{figure}
	Fig. \ref{res}.a shows the effect of increasing $\beta_{\max}$, in this case $K_I$. Note that in \eqref{rate}, by fixing the parameters $\mu(q,p)$, $\epsilon$, and $\norm{A(q)}$, the rate of convergence increases as $\beta_{max}$ increases. Therefore, it is expected that the rate convergence of S1 is faster than the rate of convergence of S2, which is verified, particularly in $Y_E$, by the results depicted in the mentioned figure.
	 Fig. \ref{res}.b illustrates the effect of modifying $K_P$, which is directly proportional to $\mu(q,p)$. Therefore, it follows that the rate of convergence of S1 is faster than the rate of convergence of S3, which is verified, particularly for $P_E$ and $Y_E$, by the results shown in the mentioned figure.
	There is a small steady-state error in the joint positions, particularly for $Y_S$. This error may be due to non-modeled physical phenomena such as dry friction or asymmetry of the motors (for further details, see Remark \ref{pidclass}).
	%\begin{figure}[t]
	%	\includegraphics[width=\columnwidth]{fig/results.eps}
	%	\caption{Trajectories for angular position}	\label{damp}
	%\end{figure}
	
	\section{Concluding Remarks}
	In this paper, we have presented an analysis to demonstrate ES properties for a class of pH systems. Furthermore, we have proven that such an analysis is suitable, under some mild conditions, to show ES properties for nonlinear mechanical systems stabilized via PID-PBC. Moreover, with the proposed Lyapunov candidate function, we have established a relationship between the physical quantities--i.e., damping and energy--and the rate of convergence of the closed-loop system. Since the PBC techniques control parameters are associated with the energy shaping process and the damping injection process, we have endowed with physical intuition the process of control parameters selection to assign a performance to the system \textit{in terms of its rate of convergence}.

	\bibliographystyle{ieeetr}
	\bibliography{ref} 
\end{document}